\newtheorem{que}{Question}
\newcommand{\parset}{
	\setlength{\parskip}{3mm}
  	\setlength{\parindent}{0mm}}
  \newcommand{\Exp}{{\mathbb{E}}}
  \DeclareMathOperator{\ccb}{iccb}  
  \DeclareMathOperator{\jcb}{jcb}
  \newcommand{\norm}[1]{\|#1\|}
  \newcommand{\R}{\mathbb{R}} 
  \newcommand{\C}{\mathbb{C}} 
  \newcommand{\N}{\mathbb{N}} 
  \newcommand{\Z}{\mathbb{Z}} 
  \newcommand{\pmset}[1]{\{-1,1\}^{#1}} 
  \newcommand{\st}{\,\mid\,} 
  \newcommand{\eps}{\varepsilon}
  \DeclareMathOperator{\sdp}{SDP}
  \newcommand{\id}{I}
  \newcommand{\beq}{\begin{equation}}
  \newcommand{\eeq}{\end{equation}}
  \newcommand{\beqn}{\begin{equation*}}
  \newcommand{\eeqn}{\end{equation*}}
  \newcommand{\beqr}{\begin{eqnarray}}
  \newcommand{\eeqr}{\end{eqnarray}}
  \newcommand{\beqrn}{\begin{eqnarray*}}
  \newcommand{\eeqrn}{\end{eqnarray*}}
  \newcommand{\bmline}{\begin{multline}}
  \newcommand{\emline}{\end{multline}}
  \newcommand{\bmlinen}{\begin{multline*}}
  \newcommand{\emlinen}{\end{multline*}}
  \theoremstyle{plain}
  \newtheorem{theorem}{Theorem}[section]
  \newtheorem{lemma}[theorem]{Lemma}
  \newtheorem{proposition}[theorem]{Proposition}
  \newtheorem{corollary}[theorem]{Corollary}
  \theoremstyle{definition}
  \theoremstyle{remark}
  \newtheorem{remark}[theorem]{Remark}
  \renewenvironment{proof}[1][]{
    	\begin{trivlist}
     	\item[\hspace{\labelsep}{\em\noindent Proof#1:\/}]}
     	{{\hfill$\Box$}
    	\end{trivlist}
  }
  \newtheorem*{rep@theorem}{\rep@title}
  \newcommand{\newreptheorem}[2]{%
  \newenvironment{rep#1}[1]{%
  \def\rep@title{#2 \ref{##1}}%
  \begin{rep@theorem}}%
  {\end{rep@theorem}}}
\newif\ifnotes\notesfalse
\definecolor{mygrey}{gray}{0.50}
\newcommand{\notename}[2]{{\textcolor{mygrey}{\footnotesize{\bf (#1:} {#2}{\bf ) }}}}
\newcommand{\noteswarning}{{\begin{center} {\Large WARNING: NOTES ON}\end{center}}}
\newcommand{\notename}[2]{{}}
\newcommand{\noteswarning}{{}}
\newcommand{\jnote}[1]{{\notename{Jop}{#1}}}
\begin{document}

\title[On  converses to the polynomial method]{On  converses to the polynomial method}

\author{Jop Bri\"{e}t}
\address{CWI \& QuSoft, Science Park 123, 1098 XG Amsterdam, The Netherlands}
\email{j.briet@cwi.nl}

\author{Francisco Escudero Guti\'errez}
\address{CWI \& QuSoft, Science Park 123, 1098 XG Amsterdam, The Netherlands}
\email{feg@cwi.nl}

\thanks{\includegraphics[height=4ex]{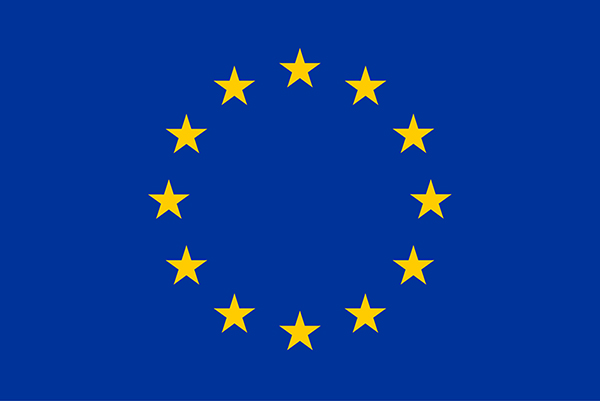} This research was supported by the European Union’s  Horizon 2020 research and innovation programme under the Marie Sk{\l}odowska-Curie grant agreement no. 945045, and by the NWO Gravitation project NETWORKS under grant no. 024.002.003}

\maketitle

\noteswarning
\begin{abstract}
A surprising `converse to the polynomial method' of Aaronson et al.\ (CCC'16) shows that any bounded quadratic polynomial can be computed exactly in expectation by a 1-query algorithm up to a universal multiplicative factor related to the famous Grothendieck constant. A natural question posed there asks if  bounded quartic polynomials can be approximated by $2$-query quantum algorithms. 
Arunachalam, Palazuelos and the first author showed that there is no direct analogue of the result of Aaronson et al.\ in this case. 
We improve on this result in the following ways: First, we point out and fix a small error in the construction that has to do with a translation from cubic to quartic polynomials. Second, we give a completely explicit example based on techniques from additive combinatorics. Third, we show that the result still holds when we allow for a small additive error. For this, we apply an SDP characterization of Gribling and Laurent (QIP'19) for the completely-bounded approximate degree. 
\end{abstract}

\section{Introduction}

A celebrated result of Beals et al.~\cite{polynomialmethod}, known as the \emph{polynomial method} in quantum complexity theory, leverages the problem of lower bounding the quantum query complexity of a Boolean function to lower bounding the approximate degree.
The method is based on the fact that for every  $t$-query quantum algorithm $\mathcal{A}$ that takes an $n$-bit input and returns a sign, there is a real $n$-variable polynomial~$f$ of degree at most~$2t$ such that $f(x)=\mathbb{E}[\mathcal{A}(x)]$ for every~$x$. Here, the expectation is taken with respect to the randomness in the measurement done by~$\mathcal A$.\footnote{We identify a quantum query algorithm with the (random) function representing its output on a given input string.}
In addition to many new lower bounds, this result led to a line of research on possible \emph{converses}, whereby a bounded polynomial~$f$ can be turned into a quantum query algorithm that approximates~$f$ and whose query complexity depends in some reasonably way on the degree of~$f$.
Here,~$f$ is \emph{bounded} if it maps the Boolean hypercube to the interval~$[-1,1]$ and a quantum query algorithm~$\mathcal A$ \emph{approximates}~$f$ if for some constant error parameter $\eps < 1$, we have that $|f(x) - \Exp[\mathcal A(x)]| \leq \eps$ for every~$x$.
For bounded polynomials of degree at most~2, the following  converse was proved in \cite{Aaronson2015PolynomialsQQ}, using a surprising application of the Grothendieck inequality from Banach space theory (we refer to~\cite{pisier2012grothendieck} for an extensive survey on Grothendieck-type inequalities).

\begin{theorem}[Aaronson et al.]\label{query1}
There exists an absolute constant $C>0$ such that the following
holds. For every bounded polynomial~$f$ of degree at most~$2$, there exists a one-query quantum algorithm~$\mathcal A$ such that $\Exp[\mathcal A(x)] = Cf(x)$ holds for every $x\in\{-1,1\}^n$.
\end{theorem}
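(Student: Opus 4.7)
The strategy is to characterize exactly which polynomials arise as $\mathbb{E}[\mathcal{A}(x)]$ for $1$-query algorithms, and then invoke Grothendieck's inequality to show every bounded degree-2 polynomial fits, after a universal rescaling.

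\textbf{Step 1: Characterization of 1-query outputs.} I would unfold the action of a generic 1-query quantum algorithm: prepare an initial state $|\psi\rangle$, apply a unitary $U$, apply the phase oracle $O_x = \sum_i x_i\, |i\rangle\!\langle i| \otimes \id$ (for $x \in \{-1,1\}^n$), apply a unitary $V$, and measure an observable $M$ with $\|M\|_{\infty} \leq 1$. Setting $|\phi_i\rangle = V(|i\rangle\!\langle i|\otimes\id)U|\psi\rangle$, the expected output expands as a quadratic expression in $x$ whose quadratic coefficient matrix has entries $\langle\phi_i|M|\phi_j\rangle$. The vectors $\{|\phi_i\rangle\}$ are mutually orthogonal with $\sum_i \|\phi_i\|^2 \leq 1$. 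The take-away is that a polynomial is realizable as $\mathbb{E}[\mathcal{A}(x)]$ precisely when its bilinear form admits a \emph{completely-bounded}-type factorization of norm at most~$1$.

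\textbf{Step 2: Symmetrization to a bilinear form.} Decomposing $f(x) = a + L(x) + Q(x)$ with $Q$ quadratic and symmetric, I would polarize $Q$ to a symmetric bilinear form $\widetilde{Q}(x,y) = \sum_{i,j} q_{ij} x_i y_j$ with $\widetilde{Q}(x,x) = Q(x)$. A standard Rademacher-averaging argument, applied to $f$, controls $\sup_{x,y \in \{-1,1\}^n}|\widetilde{Q}(x,y)|$ by an absolute constant times $\|f\|_\infty \leq 1$. This reduces the problem to factorizing a bounded bilinear form on the discrete cube.

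\textbf{Step 3: Grothendieck and assembly.} By the Grothendieck inequality, there exist unit vectors $u_i, v_j$ in a Hilbert space such that $q_{ij} = K_G \langle u_i, v_j\rangle$ (absorbing constants as needed). This inner-product presentation is what I have to realize in the form output by Step 1. One builds $|\phi_i\rangle$ as a superposition $|i\rangle\otimes|w_i\rangle$ for suitable workspace vectors $|w_i\rangle$ chosen so that $\langle\phi_i|M|\phi_j\rangle \propto \langle u_i, v_j\rangle$ for an appropriate bounded observable $M$ (for instance a swap-like operator between two registers, one loaded with $u$-vectors and one with $v$-vectors via the unitaries $U,V$). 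The linear part $L$ is encoded by an extra qubit whose amplitude is controlled by the oracle, and the constant $a$ is absorbed by shifting the observable $M$. Choosing $C$ small enough relative to $1/K_G$ and the constants lost in symmetrization ensures $\mathbb{E}[\mathcal{A}(x)] = Cf(x) \in [-1,1]$.

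\textbf{Main obstacle.} The delicate point is reconciling the rigid structure of the 1-query model — in particular the orthogonality constraints on the $|\phi_i\rangle$ coming from the oracle — with the unconstrained unit-vector factorization produced by Grothendieck's theorem. This requires explicitly embedding the $u_i, v_j$ into orthogonal sectors indexed by the query register, doubling the space so that the inner products reappear as $\langle\phi_i|M|\phi_j\rangle$ for a unitary-consistent $M$. The other careful point is simultaneously encoding $a$, $L$, and $Q$ in the same single-query algorithm while keeping the output polynomial uniformly bounded, which fixes the universal constant $C$.
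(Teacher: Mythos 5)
First, a remark on scope: the paper does not prove Theorem~\ref{query1}; it is quoted from \cite{Aaronson2015PolynomialsQQ}, so your attempt can only be compared with the argument in that reference. Your architecture is the same as theirs: characterize $1$-query outputs as quadratic forms $\sum_{i,j}x_ix_j\langle \phi_i|M|\phi_j\rangle$ with mutually orthogonal $\phi_i$, $\sum_i\|\phi_i\|^2\le 1$ and $\|M\|\le 1$; decouple the degree-$2$ part to a bounded bilinear form; feed it to Grothendieck's inequality. Steps~1 and~2 are fine: the characterization is correct, the decoupling loses only an absolute constant (e.g.\ via $\widetilde Q(x,y)=Q((x+y)/2)-Q((x-y)/2)$ together with $f(x)\pm f(-x)$ to isolate the homogeneous parts), and the linear and constant terms are handled by a dummy unqueried index and a shift of $M$, as you say.

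The gap is in Step~3, and it is exactly the point you flag as the ``main obstacle'' but do not resolve. The version of Grothendieck's inequality you invoke --- $q_{ij}=K_G\langle u_i,v_j\rangle$ with \emph{unit} vectors --- is true but too weak. In the $1$-query model the realizable coefficient matrices are exactly those of the form $\big(\lambda_i\mu_j\langle\hat\phi_i,M\hat\phi_j\rangle\big)_{ij}$ with $\sum_i\lambda_i^2\le 1$, $\sum_j\mu_j^2\le 1$ and $\|M\|\le 1$; the amplitudes must be square-summable. A unit-vector Gram representation carries no information about how to choose these amplitudes, and with any fixed choice such as the uniform one $\lambda_i=\mu_i=1/\sqrt n$ you are forced to realize the block $\big(n\,q_{ij}/c\big)_{ij}$ inside a contraction, which requires $c\gtrsim n\|q\|_{\mathrm{op}}$; for bounded bilinear forms such as $q=e_1e_1^{\mathsf T}$ this is $\Theta(n)$, not an absolute constant. (Note also that $\sum_{ij}\langle u_i,v_j\rangle x_ix_j$ with unit vectors need not even be bounded --- all $u_i=v_j$ gives $(\sum_ix_i)^2$ --- so no assembly step can work from the unit-vector data alone.) What is needed is the factorization form of Grothendieck's inequality (Lindenstrauss--Pe{\l}czy\'nski): if $\sup_{x,y\in\pmset{n}}|\sum_{ij}q_{ij}x_iy_j|\le 1$, then $q=D_\lambda B D_\mu$ with $\|\lambda\|_2,\|\mu\|_2\le 1$ and $\|B\|_{\mathrm{op}}\le K_G$. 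With that, the state $\tfrac{1}{\sqrt2}\sum_i x_i\,|i\rangle\big(\lambda_i|0\rangle+\mu_i|1\rangle\big)$ and the observable $\tfrac{1}{K_G}\big(B\otimes|0\rangle\langle1|+B^{\mathsf T}\otimes|1\rangle\langle0|\big)$ finish your Step~3 with an absolute constant. So: right skeleton, but the specific form of Grothendieck you chose would make the construction lose a factor polynomial in $n$.
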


This ``multiplicative converse'' implies an approximation with  \emph{additive} error at most $1-C$. 
A natural question is if this result generalizes to quartic polynomials and two-query quantum algorithms~\cite[Section~5, Question~1]{Aaronson2015PolynomialsQQ}. Based on the probabilistic method and a new characterization of quantum query algorithms in terms of completely bounded polynomials, a counterexample to a direct analog of Theorem~\ref{query1} was given for quartic polynomials in~\cite{QQA=CBF}.

\begin{theorem}[Arunachalam--Bri\"{e}t--Palazuelos]\label{thm:counter4}
	For any $C>0$, there exist an $n\in \N$ and a bounded quartic $n$-variable polynomial~$f$ such that no two-query quantum algorithm~$\mathcal A$  satisfies $\Exp[\mathcal{A}(x)] = C f(x)$ for every~$x\in\{-1,1\}^n$.
\end{theorem}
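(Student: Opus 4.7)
The plan is to invoke the completely bounded polynomial characterization of quantum query algorithms from~\cite{QQA=CBF} and then exhibit an explicit family of bounded quartic polynomials whose cb-norm grows without bound, using a trilinear gap from additive combinatorics lifted up one degree.

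\textbf{Step 1 (Reduction).} The characterization of~\cite{QQA=CBF} states that a polynomial $f$ on $\pmset{n}$ of degree at most $2t$ equals $\Exp[\mathcal A(x)]$ for some $t$-query quantum algorithm $\mathcal A$ if and only if its completely bounded polynomial norm $\norm{f}_{\cb}$ is at most $1$. For $t=2$, Theorem~\ref{thm:counter4} thus reduces to the statement that for every $C>0$ there exists a bounded quartic $f$ on some $\pmset{n}$ with $\norm{f}_{\cb} > 1/C$, i.e.\ to producing bounded quartic polynomials with arbitrarily large cb-norm. The theorem is then equivalent to
\[
\sup_{n,\,f} \frac{\norm{f}_{\cb}}{\norm{f}_\infty} = +\infty,
\]
where the supremum is taken over bounded quartic polynomials $f$ on $\pmset{n}$.

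\textbf{Step 2 (Cubic gap from additive combinatorics).} I would construct a trilinear form
\[
g(x,y,z) \;=\; \frac{1}{N}\sum_{(a,b,c)\in T}\chi_a(x)\chi_b(y)\chi_c(z),
\]
on $\pmset{n}\times\pmset{n}\times\pmset{n}$, where $T\subseteq(\F_2^n)^3$ is a set of additive triples $a+b+c=0$ and $\chi_v(u)=\prod_i u_i^{v_i}$. With $N$ chosen so that $\norm{g}_\infty\le 1$ via a Fourier/counting bound on the structured set $T$, a lower bound on $\norm{g}_{\cb}$ follows by substituting operator-valued arguments: replacing each $x_i$, $y_i$, $z_i$ by suitable Pauli-type self-adjoint contractions and evaluating the resulting triple sum using the additive structure of $T$ produces an operator norm that diverges with $n$. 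This is the step where the additive-combinatorial input is crucial: the same kind of identity that makes $\sum_{a+b+c=0} \chi_a\otimes\chi_b\otimes\chi_c$ factor into a Gowers-like product is what forces the cb-witness to be large. This replaces the probabilistic argument of~\cite{QQA=CBF} by a concrete, checkable construction.

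\textbf{Step 3 (Lifting to quartic; main obstacle).} To obtain the quartic, I introduce a fresh variable $w\in\{-1,1\}$ and take $f(x,y,z,w)=w\cdot g(x,y,z)$ on $\pmset{3n+1}$. This is manifestly quartic with $\norm{f}_\infty=\norm{g}_\infty$, and by substituting the scalar $w=1$ into any cb-witness one gets $\norm{f}_{\cb}\ge \norm{g}_{\cb}$. This clean multiplicative lifting bypasses the flaw flagged in the abstract, where a subtler translation (such as via $\tfrac{1+w}{2}$, or one that reuses an existing variable) fails to transfer the bound exactly. The main obstacle is Step~2: choosing $T$ so that the sup-norm upper bound and the operator-substitution cb-norm lower bound provably diverge as $n\to\infty$, and in a way that is self-contained rather than appealing to a nonconstructive Varopoulos-type statement about trilinear forms. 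Once this divergence is in hand, selecting $n$ large enough that $\norm{g}_{\cb}>1/C$ and combining Steps~1 and~3 yields the required quartic counterexample for arbitrary $C>0$.
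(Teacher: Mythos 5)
There are two genuine gaps, and the more serious one is conceptual: your Steps~1 and~3 reproduce precisely the error in~\cite{QQA=CBF} that this paper sets out to fix. The characterization of $t$-query algorithms (Lemma~\ref{lem:qalgccb}) is not a statement about a well-defined norm of the function $f:\pmset{n}\to\R$; it says that $\Exp[\mathcal A(x)]=h(x,1)$ for \emph{some} degree-$2t$ form $h$ in $n+1$ variables with $\|h\|_{\ccb}\le 1$. Degree-4 forms agreeing with $Cf$ on the hypercube are far from unique: they may differ by non-multilinear monomials such as $x_i^2x_jx_k$ versus $x_jx_kz^2$ (the paper's example is $x_1^2x_2^2-x_1^2-x_2^2+1\equiv 0$ on $\pmset{2}$), and such monomials do \emph{not} vanish under operator substitution. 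So to rule out every two-query algorithm you must lower bound $\|h\|_{\ccb}$ for \emph{every} form $h$ with $h(x,1)=Cf(x)$, not just for the single representation $w\cdot g$; your ``substitute $w=1$ into a cb-witness'' argument only handles that one representation. The paper's fix (Lemma~\ref{lem:varopoulos} together with Corollary~\ref{cor:ccblb}) is to build a Varopoulos-type witness whose matrices satisfy $A_i^2=0$, with the homogenizing variable sent to the zero matrix and with $\langle u,v\rangle=\langle u,A_iv\rangle=\langle u,A_iA_jv\rangle=0$: this makes the witness blind to every monomial except the multilinear ones of the form $x_0x_ix_jx_k$, whose coefficients \emph{are} determined by the function on the hypercube, yielding $\|h\|_{\ccb}\ge\|f\|_2^2/\Delta(f)$ uniformly over all admissible $h$. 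The flaw you claim to ``bypass'' is not about how the quartic is built from the cubic; it is exactly this non-uniqueness of degree-4 forms on the hypercube.

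The second gap is Step~2, which you acknowledge is the main obstacle but which also does not typecheck as written: with $\chi_v(u)=\prod_i u_i^{v_i}$ a Fourier character, the product $\chi_a(x)\chi_b(y)\chi_c(z)$ has degree $|a|+|b|+|c|$, so your $g$ is not a cubic polynomial in the $\pm1$ variables and $w\cdot g$ is not quartic. To get a genuine trilinear form you must index the \emph{variables} by group elements (one variable per element of each of the three blocks, as in the paper's $x(1,a)\,x(2,a+b)\,x(3,a+2b)$ over $\Z_n$), and you then need a quantitative divergence of $\|f\|_2^2/(\Delta(f)\|f\|_\infty)$. The paper supplies this either probabilistically (random $\pm1$ coefficients give $\|f\|_2^2=\binom{n}{3}$, $\Delta(f)\le C\sqrt n$ and $\|f\|_\infty\le Cn^2$ with high probability) or explicitly (3-term progressions weighted by the M\"obius function, with $\|f\|_\infty\le n^2\|f_0\|_{U^3}$ from the generalized von Neumann inequality, the Tao--Ter\"av\"ainen bound $\|f_0\|_{U^3}=o(1)$, and $\Delta(f)\le 1$ because each slice is a signed partial permutation matrix). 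Without one of these inputs your construction does not produce the required divergence, and for the intended flavor of explicit example the pattern must be a progression-type configuration weighted by a function of small $U^3$ norm, not the full set of additive triples $a+b+c=0$.
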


However, this result does not exclude the possibility that all bounded quartic polynomials can be (additively) approximated by  two-query quantum algorithms.
Moreover, the result is not constructive, relying on results from random matrix theory to show the existence of such polynomials.
Finally, the result was obtained by transforming a certain random \emph{cubic} polynomial into a quartic polynomial with similar properties.
As we will explain here, the argument given in~\cite{QQA=CBF} to show that there is such a transformation contains an error.
Here, we address these issues as follows:

First, we correct the error in~\cite{QQA=CBF}, showing that Theorem~\ref{thm:counter4} holds as stated.
	
Second, we give a completely explicit example for Theorem~\ref{thm:counter4} using ideas from the field of additive combinatorics that were applied to construct counterexamples to certain far-reaching generalizations of the Grothendieck inequality~\cite{briet2018failure}.
	
Third, we strengthen Theorem~\ref{thm:counter4} by showing that it still holds with a small additive error:
	
	\begin{theorem}\label{thm:NoMult+AddConverse}
		For any $C>0$, there exist an $n\in \N$, an~$\eps >0$ and a bounded quartic $n$-variable polynomial~$f$ such that  no two-query algorithm $\mathcal{A}$ satisfies $|\Exp[\mathcal{A}(x)]-Cf(x)|< \eps$ for every $x\in\{-1,1\}^n$.
	\end{theorem}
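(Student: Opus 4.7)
The plan is to promote Theorem~\ref{thm:counter4} from a ``no-exact-computation'' statement to a ``no-additive-approximation'' statement by a short closedness argument applied to the feasible region of the Gribling--Laurent semidefinite program.

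First, I would recall the identification (from~\cite{QQA=CBF}) between the set
\[
\mathcal{Q}_2^n = \bigl\{\,g : \{-1,1\}^n \to \R \st g(x) = \E[\mathcal A(x)] \text{ for some two-query algorithm } \mathcal A\,\bigr\}
\]
and the unit ball of completely bounded $n$-variable polynomials of degree at most $4$. The Gribling--Laurent SDP characterization then exhibits this unit ball as the feasible region of a semidefinite program over matrices of finite size depending only on $n$. Sitting inside $\R^{\{-1,1\}^n}$, this realizes $\mathcal{Q}_2^n$ as a convex, \emph{closed}, bounded subset of a finite-dimensional space.

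Next, given $C>0$, I would invoke Theorem~\ref{thm:counter4}, using either the corrected existence proof or the explicit additive-combinatorics construction developed in this paper, to produce an $n \in \N$ and a bounded quartic polynomial $f : \{-1,1\}^n \to [-1,1]$ with $Cf \notin \mathcal{Q}_2^n$. Setting
\[
\eps := \inf_{g \in \mathcal{Q}_2^n} \|Cf - g\|_\infty,
\]
closedness (indeed compactness) of $\mathcal{Q}_2^n$ guarantees that the infimum is attained and, since $Cf$ is not in the set, is strictly positive. For this $\eps$, any putative two-query algorithm $\mathcal A$ satisfying $|\E[\mathcal A(x)] - Cf(x)| < \eps$ for all $x$ would determine a $g \in \mathcal{Q}_2^n$ with $\|Cf - g\|_\infty < \eps$, contradicting the definition of $\eps$.

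The main obstacle is justifying that $\mathcal{Q}_2^n$ really is closed. A two-query algorithm may in principle act on a Hilbert space of arbitrarily large ancilla dimension, so the bare algorithmic description does not visibly produce a closed image under the map $\mathcal A \mapsto \E[\mathcal A(\cdot)]$. This is precisely what the Gribling--Laurent SDP resolves: it replaces the existentially-quantified description of $\mathcal{Q}_2^n$ by a finite-dimensional semidefinite feasibility problem, whose feasible set is automatically closed. Once this structural fact is in place, the remainder of the argument is the one-line distance computation above.
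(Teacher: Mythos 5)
Your argument is correct in outline, but it is a genuinely different route from the one taken in the paper. You prove existence of a positive~$\eps$ \emph{softly}: the set $\mathcal Q_2^n$ of two-query-realizable functions is convex and compact (it is the image of the $\ccb$-unit ball of degree-$4$ forms in $n+1$ variables, which is closed by lower semicontinuity of a supremum of continuous functions and bounded since $\|p\|_{\ccb}\geq \sup_{x\in[-1,1]^{n+1}}|p(x)|$, under the linear map $p\mapsto p(\cdot,1)$), so $Cf\notin\mathcal Q_2^n$ forces $\mathrm{dist}_\infty(Cf,\mathcal Q_2^n)>0$. The paper instead runs the Gribling--Laurent program in its \emph{dual} role: it exhibits an explicit feasible pair $\phi=x_0f/\|f\|_1$, $w=\Delta(f)/\|f\|_1$ (built from the Varopoulos-type matrices of Corollary~\ref{cor:ccblb}) and thereby lower-bounds $\sdp(Cg,4)$ by an explicit quantity of order $\frac{\|f\|_2^2}{\|f\|_1\|f\|_\infty}\bigl(1-\frac{\Delta(f)\|f\|_\infty}{\|f\|_2^2}\bigr)$, which Theorem~\ref{thm:Gribling-Laurent} converts into a concrete inapproximability radius. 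What your approach buys is brevity and the observation that \emph{any} exact counterexample automatically upgrades to an approximate one; what it gives up is any quantitative handle on~$\eps$, which is the part of the paper's contribution that makes the SDP usable for further lower bounds. Since the theorem only asserts existence of some $\eps>0$, your version suffices for the stated claim.

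One point you should make explicit: closedness of $\mathcal Q_2^n$ requires the characterization of quantum query algorithms by completely bounded forms in \emph{both} directions. Lemma~\ref{lem:qalgccb} as quoted here only says that every two-query algorithm yields a $\ccb$-bounded form; if the inclusion of $\mathcal Q_2^n$ into the (compact) image of the $\ccb$-unit ball were strict, a limit of algorithm-realizable functions could a priori fail to be algorithm-realizable. You therefore need to invoke the full ``if and only if'' statement of~\cite{QQA=CBF} (equivalently, the bounded-dimension SDP representation of $\mathcal F(n+1,4)$ with $d=(n+1)^4$ noted before~\eqref{eq:SDP}), not just the forward implication. With that citation in place the compactness argument is sound.
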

	
	This result is an application of a semidefinite-program (SDP) of Gribling and Laurent~\cite{gribling2019semidefinite} for quantum query complexity.
It can be interpreted as an analogue of results on approximate degree based on its linear-programming-based characterization (see for instance~\cite{BunKhotariThaler:2018}).
	To the best of our knowledge, this is the first application of~\cite{gribling2019semidefinite} to prove lower bounds on quantum query complexity.
	As such, we believe it can serve as a first step towards using this SDP to approach other  problems such as proving large separations between approximate degree and quantum query complexity, for example~\cite{aaronson2021open}. 
	The SDP is based on a characterization of quantum query algorithms from~\cite{QQA=CBF} that thus far has proved difficult to use. Surprisingly, in parallel to our work, that was used by Bansal, Sinha and de Wolf~\cite{bansal2022influence} to make progress on a famous problem on the need for structure in quantum speed-ups.
	Their work settled a special case of the Aaronson-Ambainis conjecture~\cite{Aaronsons:2014} showing that polynomials resulting from particular types of quantum algorithms have influential variables.

	In similar vein, we use a basic lower bound on the (real) Grothendieck constant, denoted~$K_G$, based on the CHSH Bell inequality to give an impossibility result for one-query quantum algorithms. That is, we show that there exists a bounded quadratic polynomial~$f$ such that no one-query quantum algorithm approximates~$f$ with error less than $1 - 1/\sqrt{2}$.
	Motivated by this, we pose as an open question whether this can be improved to $1 - 1/K_G$.
	Since the result of~\cite{Aaronson2015PolynomialsQQ} achieves this for bounded bilinear forms, this would give yet another characterization of the Grothendieck constant. 
	Tsirelson's characterization in the context of Bell inequalities~\cite{Tsirelson} being a famous example in quantum information theory, for instance.

\section{Preliminaries}\label{sec:preliminaries}

Unless stated otherwise, below~$C$ will stand for an absolute positive  constant whose value may change from line to line.
All polynomials are assumed to be real and multivariate.
A homogeneous polynomial is referred to as a \emph{form}.
A polynomial is multilinear if each variable appears with degree at most~1. 
Given an $n$-variate polynomial~$f$ and $p\in [1,\infty)$, define
\begin{align*}
\|f\|_p &= \big(\Exp_{x\in \pmset{n}}f(x)^p\big)^{\frac{1}{p}}\\
\|f\|_\infty &= \max_{x\in \pmset{n}}|f(x)|.
\end{align*}
We also define the following ``commutative version'' of a completely bounded norm:
\beqn
\|f\|_{\ccb} = \sup_{d\in \N}\Big\{\big\|f(A_1,\dots,A_n)\| \st A_i\in \C^{d\times d},\, \|A_i\|\leq 1,\, [A_i,A_j] = 0\Big\},
\eeqn
where the norms on the right-hand side are the usual operator norms.\footnote{The notation iccb stands for ``identical commutative completely bounded'', where the word identical distinguishes it from another natural variant of the completely bounded norm of a polynomial.}

The following lemma~\cite[Theorem~1.3, Proposition~4.4]{QQA=CBF} relates quantum query algorithms to completely bounded polynomials.

\begin{lemma}\label{lem:qalgccb}
Let~$\mathcal A$ be a $t$-query quantum algorithm.
Then, there exists an $(n+1)$-variate form~$f$ of degree~$2t$ such that $\|f\|_{\ccb} \leq 1$ and which satisfies $f(x,1) = \Exp[\mathcal A(x)]$ for every $x\in \pmset{n}$.
\end{lemma}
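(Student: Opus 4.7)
The plan is to read off $f$ from $\mathcal A$'s canonical form and then bound $\|f\|_{\ccb}$ by a ``doubling'' operator-dilation argument that sidesteps the natural obstruction: compressing a single operator of the form $\mathcal L(A)^\dagger(\Id\otimes M)\mathcal L(A)$ would produce factors $A_i^\dagger$ that do not appear in $f$.

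Put $\mathcal A$ in standard form with initial state $|\psi_0\rangle\in\mathcal H$, unitaries $U_0,\dots,U_t$, phase oracle $O_x=\sum_{i=0}^n x_i P_i$ (where $x_0\equiv1$ and $P_0,\dots,P_n$ are mutually orthogonal projections on the query register summing to $\Id$), and a contractive self-adjoint observable $M$. Expanding $L_x=U_t O_x U_{t-1}\cdots O_x U_0=\sum_{\vec i}x_{i_1}\cdots x_{i_t}V_{\vec i}$ with $V_{\vec i}=U_t P_{i_1}U_{t-1}P_{i_2}\cdots P_{i_t}U_0$ gives $\Exp[\mathcal A(x)]=\langle\psi_0|L_x^\dagger M L_x|\psi_0\rangle=\sum_{\vec i,\vec j}c_{\vec i,\vec j}\,x_{i_1}\cdots x_{i_t}x_{j_1}\cdots x_{j_t}$ with $c_{\vec i,\vec j}=\langle\psi_0|V_{\vec j}^\dagger M V_{\vec i}|\psi_0\rangle$. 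This is a homogeneous form of degree exactly $2t$ in $(x_0,\dots,x_n)$; after relabelling so $x_0$ becomes the distinguished last coordinate, one obtains the $(n+1)$-variate form $f$ satisfying $f(x,1)=\Exp[\mathcal A(x)]$.

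For the norm bound, fix commuting contractions $A_0,\dots,A_n\in B(\mathbb C^D)$ and an auxiliary tuple $B_0,\dots,B_n$ of arbitrary contractions (no commutativity assumed between the $A$'s and $B$'s, nor within the $B$'s). Lift the oracle to $\tilde O(A)=\sum_i A_i\otimes P_i$ on $\mathbb C^D\otimes\mathcal H$; because the $P_i$ are mutually orthogonal projections, $\|\tilde O(A)\|=\max_i\|A_i\|\le1$, so both $\mathcal L(A)=(\Id\otimes U_t)\tilde O(A)(\Id\otimes U_{t-1})\cdots\tilde O(A)(\Id\otimes U_0)$ and $R(B)=(\Id\otimes U_0^\dagger)\tilde O(B)(\Id\otimes U_1^\dagger)\cdots\tilde O(B)(\Id\otimes U_t^\dagger)$ are contractions. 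Their expansions read $\mathcal L(A)=\sum_{\vec i}A_{i_1}\cdots A_{i_t}\otimes V_{\vec i}$ and $R(B)=\sum_{\vec j}B_{j_t}\cdots B_{j_1}\otimes V_{\vec j}^\dagger$, so the compression $\hat f(A,B):=(\Id\otimes\langle\psi_0|)\,R(B)\,(\Id\otimes M)\,\mathcal L(A)\,(\Id\otimes|\psi_0\rangle)$ equals $\sum_{\vec i,\vec j}c_{\vec i,\vec j}\,B_{j_t}\cdots B_{j_1}\,A_{i_1}\cdots A_{i_t}$ in $B(\mathbb C^D)$ and satisfies $\|\hat f(A,B)\|\le\|R(B)\|\|M\|\|\mathcal L(A)\|\le 1$.

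The conclusion follows by specializing $B_i:=A_i$: the estimate gives $\|\hat f(A,A)\|\le 1$, and commutativity of the $A_i$'s lets one merge each product $A_{j_t}\cdots A_{j_1}A_{i_1}\cdots A_{i_t}$ into the symmetric monomial prescribed by $f$, yielding $\hat f(A,A)=f(A)$. Passing to the supremum over commuting tuples of contractions yields $\|f\|_{\ccb}\le 1$. The main conceptual step is the doubling trick itself: splitting the $2t$ oracle queries into an ``$A$-half'' realized by $\mathcal L$ and a ``$B$-half'' realized by $R$ keeps every substituted variable on an equal non-adjoint footing, so the diagonal $B=A$ returns exactly $f(A)$ rather than the conjugated variant that would arise from a naive single-copy dilation.
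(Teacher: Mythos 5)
Your argument is correct and is essentially the standard proof; the paper itself does not reprove this lemma but delegates it to \cite{QQA=CBF}, where the same ``split-and-diagonalize'' dilation idea underlies the characterization. Two small points worth being aware of. First, the equality $\|\tilde O(A)\|=\max_i\|A_i\|$ uses not just mutual orthogonality of the~$P_i$ but also $\sum_i P_i=\Id$, so that $\mathbb{C}^D\otimes\mathcal H$ decomposes as $\bigoplus_i\mathbb{C}^D\otimes\mathrm{Ran}(P_i)$ and $\tilde O(A)$ is block-diagonal with blocks $A_i\otimes\Id$; you state this correctly but it is the step a reader could miss. Second, the observation that the doubling is exactly what sidesteps the appearance of adjoints $A_i^\dagger$ is the right conceptual point: $R(B)$ is a contraction for \emph{any} choice of contractive $B$-tuple (no commutativity needed to get $\|\hat f(A,B)\|\le 1$), and commutativity of the $A$-tuple only enters on the diagonal $B=A$ to reassemble the products $A_{j_t}\cdots A_{j_1}A_{i_1}\cdots A_{i_t}$ into the commutative monomials of $f$, giving $\hat f(A,A)=f(A)$ and hence $\|f\|_{\ccb}\le 1$. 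The degree claim is also handled correctly: each summand $x_{i_1}\cdots x_{i_t}x_{j_1}\cdots x_{j_t}$ has total degree $2t$, so $f$ is homogeneous of degree $2t$ in the $n+1$ variables $x_0,\dots,x_n$ (it need not be multilinear, and the lemma does not claim it is). No gap found.
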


We will also use a quantity associated specifically with multilinear cubic forms, that is polynomials of the form:
\beq\label{eq:3form}
f(x) = \sum_{S\in {[n]\choose 3}}c_S \prod_{i\in S}x_i,
\eeq
where the~$c_S$ are some real coefficients.
For $i\in [n]$, define the $i$th \emph{slice} of~$f$ to be the symmetric matrix $M_i\in \R^{n\times n}$ with $(j,k)$-coefficient equal to~$c_{\{i,j,k\}}$ if $i,j,k$ are pairwise distinct and~$0$ otherwise. Then, define 
\beqn
\Delta(f) = \max_{i\in[n]} \|M_i\|.
\eeqn

The following is a slight variant of a decomposition due to Varopoulos~\cite{Varopoulos:1974}. 

\begin{lemma}[tri-linear Varopoulos decomposition]\label{lem:varopoulos}
Let~$f$ be an $n$-variate multilinear cubic form as in~\eqref{eq:3form}.
Then, for some $d\in \N$, there exist pairwise commuting matrices~$A_1,\dots,A_n\in \R^{d\times d}$ and orthogonal unit vectors~$u,v\in \R^d$ such that $\|A_i\|\leq 1$, and
\begin{align}
A_i^2 &=0\label{eq:var1}\\
\langle u, A_iv\rangle &=0\label{eq:var2}\\
\langle u, A_iA_jv\rangle &=0\label{eq:var3}\\
\langle u, A_iA_jA_kv\rangle &= \frac{c_{\{i,j,k\}}}{\Delta(f)}\label{eq:var4}
\end{align}
for all pairwise distinct $i,j,k\in [n]$.
\end{lemma}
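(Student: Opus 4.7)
The plan is to construct $A_1,\dots,A_n$ explicitly as block-upper-triangular matrices on a 4-graded space $V = V_0 \oplus V_1 \oplus V_2 \oplus V_3$, using the slice matrices of $f$ in the middle block. Take $V_0 = V_3 = \R$ with unit basis vectors $v \in V_0$ and $u \in V_3$, and $V_1 = V_2 = \R^n$ with the standard basis. For each $k \in [n]$, define
\[
A_k = \begin{pmatrix} 0 & 0 & 0 & 0 \\ B_k & 0 & 0 & 0 \\ 0 & C_k & 0 & 0 \\ 0 & 0 & D_k & 0 \end{pmatrix},\qquad B_k = e_k,\qquad C_k = M_k/\Delta(f),\qquad D_k = e_k^{T},
\]
where $M_k$ is the $k$th slice of $f$. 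The four required properties are then direct calculations.

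First I would verify pairwise commutation. Since $A_k$ is block-nilpotent of shift $+1$, the only nonzero blocks of $A_iA_j$ are $C_iB_j$ (from $V_0$ to $V_2$) and $D_iC_j$ (from $V_1$ to $V_3$). One checks $C_iB_j = M_ie_j/\Delta(f)$ has $a$th entry $c_{\{a,i,j\}}/\Delta(f)$, symmetric in $i,j$, and similarly $D_iC_j = e_i^{T}M_j/\Delta(f)$ has $k$th entry $c_{\{i,j,k\}}/\Delta(f)$, again symmetric in $i,j$. Hence $[A_i,A_j]=0$. The nilpotency $A_k^2 = 0$ reduces to $C_kB_k = 0$ and $D_kC_k = 0$, which hold because the $k$th row and column of $M_k$ are identically zero (by the definition of the slice).

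Next I would check the norm bound. Because $A_k$ sends $V_j$ isometrically into the orthogonal summand $V_{j+1}$, the operator norm is $\max(\|B_k\|,\|C_k\|,\|D_k\|)$. The outer blocks $B_k, D_k$ are unit vectors, and $\|C_k\| = \|M_k\|/\Delta(f) \le 1$ by the definition of $\Delta(f) = \max_i\|M_i\|$. This is the only place where the choice of normalization enters, and it is precisely what makes the construction tight.

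Finally, the identities for $u$ and $v$ follow by degree counting: $A_iv \in V_1$ and $A_iA_jv \in V_2$ are orthogonal to $u \in V_3$, giving \eqref{eq:var2} and \eqref{eq:var3}, while
\[
\langle u, A_iA_jA_k v\rangle \;=\; D_iC_jB_k \;=\; e_i^{T}\,(M_j/\Delta(f))\,e_k \;=\; c_{\{i,j,k\}}/\Delta(f)
\]
for pairwise distinct $i,j,k$, yielding \eqref{eq:var4}; \eqref{eq:var1} has already been verified. I don't anticipate a real obstacle: the main conceptual point is that the middle block must be taken to be the slice $M_k$ itself (rather than, say, a direct sum encoding individual coefficients), so that the operator-norm bound $\|M_k\| \le \Delta(f)$ can be applied directly rather than a much weaker Frobenius-type bound.
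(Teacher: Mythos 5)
Your construction is essentially identical to the paper's: the same $(2n+2)$-dimensional graded space, the same block-nilpotent matrices with $e_k$, the normalized slice $M_k/\Delta(f)$, and $e_k^{\mathsf{T}}$ on the three subdiagonal blocks, and the same choice of $u,v$ as the top and bottom basis vectors; all your verifications (commutativity via symmetry of the slices, $A_k^2=0$ via the vanishing $k$th row and column of $M_k$, the norm bound via $\|M_k\|\le\Delta(f)$, and the grading argument for the inner products) are correct. The only difference is that the paper outsources most of these checks to Lemma~2.11 of~\cite{briet2018failure}, whereas you carry them out explicitly.
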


\begin{proof}
For each $i\in[n]$, define~$M_i$ as above.
Define $W_i = \Delta(f)^{-1}M_i$ and note that this has operator norm at most~1.
For each $i\in[n]$, define the $(2n+2)\times(2n+2)$ block matrix
\beqn
A_i
=
{\footnotesize
\left[
\begin{array}{c|c|c|c}
\phantom{M}&&&\\
\hline
e_i&\phantom{M}&&\\
\hline
&W_i^\mathsf{T}&\phantom{M}&\\
\hline
&&e_i^\mathsf{T}&\phantom{M}
\end{array}
\right]},
\eeqn
where the first and last rows and columns have size~1, the second and third have size~$n$ and
where the empty blocks are filled with zeros.
Define $u = e_{2n+1}$ and $v = e_1$.
The rest of the proof is identical to the proof of~\cite[Lemma~2.11]{briet2018failure}, except for the property that~$A_i^2 = 0$.
This follows from the fact that
\beqn
A_i^2
=
{\footnotesize
\left[
\begin{array}{c|c|c|c}
\phantom{x}&&&\\
\hline
&\phantom{x}&&\\
\hline
W_i^\mathsf{T}e_i&&\phantom{x}&\\
\hline
&e_i^\mathsf{T}W_i^\mathsf{T}&&\phantom{x}
\end{array}
\right]}.
\eeqn
and since the $i$th row and $i$th column of~$M_i$ are zero.
\end{proof}

\begin{corollary}\label{cor:ccblb}
Let~$f$ be an $n$-variate multilinear cubic form as in~\eqref{eq:3form}.
Suppose that an $(n+2)$-variate quartic form $h\in \R[x_0,x_1,\dots,x_n,z]$ satisfies $h(x,1) = x_0f(x_1,\dots,x_n)$ for every $x\in \pmset{n+1}$.
Then,
\beqn
\|h\|_{\ccb} \geq \frac{\|f\|_2^2}{\Delta(f)}.
\eeqn
\end{corollary}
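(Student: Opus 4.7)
The plan is to witness the lower bound by exhibiting commuting contractions $B_0,\dots,B_n,Z$ at which the operator norm of $h$ is at least $\|f\|_2^2/\Delta(f)$. First, apply Lemma~\ref{lem:varopoulos} to $f$ to obtain the commuting nilpotents $A_1,\dots,A_n$ (with $A_i^2=0$) and the orthogonal unit vectors $u,v$. The guiding identity is the direct computation
\beqn
\langle u,\, f(A_1,\dots,A_n)\, v\rangle \;=\; \sum_{|S|=3} \frac{c_S^2}{\Delta(f)} \;=\; \frac{\|f\|_2^2}{\Delta(f)};
\eeqn
the rest of the argument routes $h$ down to this matrix element.

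To handle the extra variables $x_0$ and $z$, I would tensor the Varopoulos system with a $2\times 2$ nilpotent factor. Concretely, on $\R^2\otimes\R^d$ set $B_0 = X_0 \otimes I$ with $X_0 = \bigl(\begin{smallmatrix}0&1\\0&0\end{smallmatrix}\bigr)$, $B_i = I \otimes A_i$ for $i\in[n]$, and $Z = 0$; these commute and are contractions. The utility of these choices is that $B_i^2 = 0$ for every $i$, so substituting into the degree-$4$ form $h$ only monomials of shape $x_0^{a_0}\prod_{i\in S} x_i \cdot z^\beta$ with $a_0\in\{0,1\}$, $\beta = 0$, $S\subseteq[n]$, and $a_0 + |S| = 4$ survive. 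The two remaining cases are $(a_0,|S|)\in\{(0,4),(1,3)\}$, and the $(0,4)$ case vanishes because the explicit block structure of the $A_i$'s from the proof of Lemma~\ref{lem:varopoulos} (four block rows, each $A_i$ strictly sub-diagonal) forces $A_iA_jA_kA_l = 0$ on any $4$-tuple.

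The main algebraic point, and the step I expect to be the chief obstacle, is to show that the coefficient of the monomial $x_0\prod_{i\in S}x_i$ in $h$ (for $|S| = 3$) equals $c_S$. This is where the Boolean hypothesis enters: $h(x,1) = x_0 f(x)$ on $\pmset{n+1}$ is equivalent to the multilinear reduction of $h(x,1)$ modulo $\{x_i^2-1\}$ being $x_0 f(x)$. A short degree count then rules out any non-multilinear monomial of $h(x,1)$ aliasing to $x_0\prod_{i\in S}x_i$, since any wrong-parity extra factor would add at least $2$ to the total degree and push it to at least $6$, exceeding the quartic cap. With this in hand,
\beqn
h(B_0,\dots,B_n,Z) \;=\; \sum_{S\subseteq[n],\,|S|=3} c_S\,\Bigl(X_0 \otimes \prod_{i\in S}A_i\Bigr),
\eeqn
and evaluating the matrix element at $e_1\otimes u$ and $e_2\otimes v$, with $e_1 = (1,0)^{\mathsf T}$ and $e_2 = (0,1)^{\mathsf T}$ (so $\langle e_1, X_0 e_2\rangle = 1$), yields exactly $\|f\|_2^2/\Delta(f)$ by the guiding identity. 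This lower-bounds $\|h(B_0,\dots,B_n,Z)\|$ and hence $\|h\|_{\ccb}$, as required.
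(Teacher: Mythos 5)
Your proof is correct and follows essentially the same route as the paper: apply Lemma~\ref{lem:varopoulos}, evaluate $h$ at commuting contractions built from the Varopoulos matrices, and read off the matrix element $\|f\|_2^2/\Delta(f)$. The only real variation is that you absorb $x_0$ into a $2\times 2$ nilpotent tensor factor $X_0\otimes I$ rather than setting $A_0=\id$ as the paper does; both choices correctly annihilate every non-multilinear monomial of $h$ (yours via nilpotence together with the $4$-block shift structure that forces $A_iA_jA_kA_l=0$, the paper's via orthogonality of $u,v$ and properties~\eqref{eq:var1}--\eqref{eq:var3}).
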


\begin{proof}
The multilinear monomials $\chi_S(x) = \prod_{i\in S}x_i$ with $S\subseteq \{0,\dots,n\}$ satisfy the orthogonality relations
\beq\label{eq:char_orth}
\Exp_{x\in \pmset{n+1}} \chi_S(x)\chi_T(x) = \delta_{S,T}.
\eeq
It follows that~$h$ and~$x_0f$ have equal coefficients for each quartic multilinear monomial in the variables $x_0,\dots,x_n$, which are~$c_S$ for~$x_0\chi_S$ with $S\in {[n]\choose 3}$ and~0 otherwise.
Let~$A_1,\dots,A_n\in \R^{d\times d}$ and $u,v\in \R^d$ be as in Lemma~\ref{lem:varopoulos} and let $A_0 = \id, A_{n+1} = 0$.
Commutativity and properties~\eqref{eq:var1}--\eqref{eq:var3} imply that if a quartic monomial expression $A_iA_jA_kA_l$ with $i,j,k,l\in \{0,\dots,n+1\}$ has repeated indices or an index equal to~$n+1$, then $\langle u, A_iA_jA_kA_lv\rangle = 0$.
With this, it  follows from property~\eqref{eq:var4} that
\begin{align}\label{eq:hfeq}
\langle u, h(A_0,\dots,A_{n+1})v\rangle
&=
\Big\langle u, \sum_{S\in {[n]\choose 3}} c_S\, A_0\chi_S(A_1,\dots,A_n)v\Big\rangle\\
&=
 \sum_{S\in {[n]\choose 3}} c_S \langle u, \chi_S(A_1,\dots,A_n)v\rangle\nonumber\\
&=
\Delta(f)^{-1}\sum_{S\in {[n]\choose 3}}c_S^2\nonumber\\
&=
\Delta(f)^{-1}\|f\|_2^2,\nonumber
\end{align}
where the last line is Parseval's identity~\cite[Chapter~1]{ODonnell:2009}.
\end{proof}

\section{Counterexamples}

Here, we prove Theorems~\ref{thm:counter4} and~\ref{thm:NoMult+AddConverse}.
But first we discuss the error in~\cite[pp.~920]{QQA=CBF}.
The proof there uses the equation
\begin{equation}\label{eq1}
    \sum_{\alpha,\beta\in\{0,1,2,3,4\}^{n}: |\alpha|+|\beta|=4}d'_{\alpha,\beta}x^\alpha=C\sum_{\alpha\in\{0,1\}^{n}: |\alpha|=4}d_\alpha x^\alpha\ \ \forall\ x\in\{-1,1\}^{n},
\end{equation}
where $d'_{\alpha,\beta}$, $d_\alpha$ and $C$ are real numbers and $|\alpha|$ stands for $\sum_{i=1}^n\alpha_i$. 
It follows from~\eqref{eq:char_orth} that $d'_{\alpha,0} = Cd_\alpha$ for all $\alpha\in\{0,1\}^n$ such that $|\alpha| = 4$.
What is used, however, is that $d'_{\alpha,0} = Cd_\alpha$ for all $\alpha\in\{0,1,2,3,4\}^n$ such that $|\alpha| = 4$, which is not true in general. 
For instance if $n=2$, $C=1$ and $d'_{(2,2),(0,0)}=d'_{(0,0),(4,0)}=-d'_{(2,0),(2,0)}=-d'_{(0,2),(2,0)}=1$ and the rest of the coefficients set to~$0$, then~\eqref{eq1} becomes $x_1^2x^2_2-x_1^2-x_2^2+1=0$.

Corollary~\ref{cor:ccblb} gets around this issue by using a multilinear cubic form instead of just a cubic form. This results in matrices~$A_i$ in Lemma~\ref{lem:varopoulos} that square to zero 
and has the effect that terms other than quartic multilinear monomials vanish in the left-hand side of~\eqref{eq:hfeq}.

\subsection{A random example}\label{subsec:randomEx}

The probabilistic proof of Theorem~\ref{thm:counter4} uses a random cubic form as in~\eqref{eq:3form} where the coefficients~$c_S$ are chosen to be independent uniformly distributed random signs.
Parseval's identity then gives $\|f\|_2^2 = {n\choose 3}$.
Each of the slices~$M_i$ of~$f$ is a random symmetric matrix with independent mean-zero entries of absolute value at most~1.
A standard random-matrix inequality and the union bound then imply that $\Delta(f) \leq C\sqrt{n}$ with probability $1 - \exp(-Cn)$~\cite[Corollary~2.3.6]{tao2012topics}.
By Hoeffding's inequality~\cite[Theorem~2.8]{boucheron:concentration} and the union bound, we have that $\|f\|_\infty \leq Cn^2$ with probability $1 - \exp(-Cn)$.
Rescaling~$f$ then gives that there exists a bounded multilinear cubic form such that $\|f\|_2^2/\Delta(f) \geq C\sqrt{n}$.
It now follows from Lemma~\ref{lem:qalgccb} with Corollary~\ref{cor:ccblb} that the $(n+1)$-variable quartic polynomial $x_0f(x_1,\dots,x_n)$ satisfies the requirements of Theorem~\ref{thm:counter4}.

\subsection{An explicit example}
We also give a constructive proof of Theorem~\ref{thm:counter4} using techniques from~\cite{briet2018failure}, which were used there to disprove a conjecture of Pisier on certain far-reaching generalizations of the Grothendieck inequality.
We do not exactly use the construction from that paper because it involves complex functions.
Instead, we will the M\"{o}bius function (defined below), which is real valued and has the desired properties. 

Let $n$ be a positive integer to be set later and
let $f_0:\Z_n\to [-1,1]$ be a function to be set later (where as usual~$\Z_n$ denotes the group of integers modulo~$n$).
Define $f$ to be the cubic multilinear form on~$3n$ variables given by
\beq\label{eq:APform}
f(x) = \sum_{a,b\in \Z_n}x(1,a)x(2, a+b)x(3,a+2b) f_0(a+3b),
\eeq
where we indexed the variables by~$[3]\times\Z_n$.

We claim that for some choice of~$f_0$, the quartic polynomial~$x_0f$, where~$x_0$ is an additional variable, meets the requirements of Theorem~\ref{thm:counter4}.
The generalized von Neumann inequality~\cite[Lemma~11.4]{TV06} allows us to bound the~$\infty$-norm of~$f$.
For a function $g:\Z_n\to \R$ and $b\in \Z_n$, define its multiplicative derivative $\Delta_bg:\Z_n\to\R$ to be the function $\Delta_bg(a) = g(a+b)g(a)$.
The Gowers 3-uniformity norm of~$g$ is then defined as
\beqn
\|g\|_{U^3} = \big(\Exp_{a,b,c,d\in \Z_n}\Delta_b\Delta_c\Delta_dg(a)\big)^{\frac{1}{8}}.
\eeqn

\begin{lemma}[generalized von Neumann inequality]\label{lem:gvn}
Suppose that~$n$ is coprime to~6. 
Then, for any function of the form~\eqref{eq:APform}, we have that
\beqn
\|f\|_\infty \leq n^2\|f_0\|_{U^3}.
\eeqn
\end{lemma}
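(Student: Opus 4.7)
The plan is to interpret $f(x)$ as $n^2$ times a $4$-linear average over four-term arithmetic progressions in $\Z_n$, and then apply Cauchy--Schwarz iteratively to reduce it to a Gowers norm of $f_0$. Fix $x\in\pmset{3n}$ and set $g_i(t) := x(i,t) \in \{-1,1\}$ for $i\in[3]$ and $t\in\Z_n$. Rewriting the double sum as an expectation gives
\beqn
f(x) = n^2 \,\Exp_{a,b\in\Z_n}\, g_1(a)\, g_2(a+b)\, g_3(a+2b)\, f_0(a+3b),
\eeqn
so it suffices to bound the four-linear average $\Lambda(g_1, g_2, g_3, f_0)$ on the right in absolute value by $\|f_0\|_{U^3}$, uniformly over choices of $g_1,g_2,g_3:\Z_n\to\{-1,1\}$.

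The key step is three successive applications of the Cauchy--Schwarz inequality, each designed to peel off one of the $\{-1,1\}$-valued functions $g_i$. The first application, viewing $\Lambda = \Exp_a g_1(a)\, h(a)$ with $h(a)$ the inner $b$-average and expanding $|h(a)|^2$, eliminates $g_1$ (since $g_1^2 \equiv 1$) at the cost of introducing a shift variable $h_1 \in \Z_n$; a change of summation variable then produces multiplicative derivatives $\Delta_{h_1} g_2$, $\Delta_{2h_1} g_3$, and $\Delta_{3h_1} f_0$ in the remaining factors. Iterating with two further shift variables $h_2, h_3$ removes $g_2$ and $g_3$ in turn, and one obtains an inequality of the form
\beqn
|\Lambda(g_1,g_2,g_3,f_0)|^8 \;\leq\; \Exp_{a,h_1,h_2,h_3\in\Z_n}\, \Delta_{\alpha_1 h_1}\Delta_{\alpha_2 h_2}\Delta_{\alpha_3 h_3} f_0(a),
\eeqn
for some nonzero integers $\alpha_1,\alpha_2,\alpha_3 \in \{1,2,3\}$ recording the step-size coefficients of the progression at each stage.

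The main obstacle is that the right-hand side is not literally $\|f_0\|_{U^3}^8$, but differs from it by a linear rescaling of the shift variables. This is precisely where the hypothesis $\gcd(n,6) = 1$ enters: since then $2$ and $3$ are invertible in $\Z_n$, the substitutions $h_i \mapsto \alpha_i^{-1} h_i$ are bijections of $\Z_n$ that preserve the uniform measure, so we may freely re-parametrize and recognize the right-hand side as exactly $\|f_0\|_{U^3}^8$. Taking eighth roots and multiplying through by $n^2$ yields $|f(x)| \leq n^2 \|f_0\|_{U^3}$ uniformly in $x\in\pmset{3n}$, which is the claimed bound.
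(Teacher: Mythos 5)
Your argument is essentially correct: it reproduces the standard iterated Cauchy--Schwarz proof of the weighted $U^3$ von Neumann inequality for $4$-term progressions, which is precisely the result the paper invokes. Note, though, that the paper gives no internal proof here; it simply cites Lemma~11.4 of Tao and Vu's \emph{Additive Combinatorics}, so there is no alternative route in the paper to compare against. Your exposition is sound, including the observation that the step sizes $\alpha_i$ coming out of the Cauchy--Schwarz stages lie in $\{1,2,3\}$ (indeed one finds $\alpha_1=3,\alpha_2=2,\alpha_3=1$) and that invertibility of $2$ and $3$ in $\Z_n$ is what lets one re-parametrize the shifts to recover the Gowers norm exactly. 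One small structural remark: the third Cauchy--Schwarz is not quite of the same shape as the first two, because by that stage the $f_0$-factor $G(h_1,h_2)=\Exp_q \Delta_{2h_2}\Delta_{3h_1}f_0(r+q)$ no longer depends on the spatial variable $r$; the Cauchy--Schwarz is then taken over $(r,h_1,h_2)$ with $G$ treated as the inner function, and expanding $|G|^2$ introduces the final shift $h_3$. This does not affect the conclusion, but a fully written-out proof should flag it rather than present the three steps as literally identical.
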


The polynomial~$f$ has $3n$ slices, $M_{i,a}\in \R^{[3]\times\Z_n}$ for each $i\in[3]$ and $a\in \Z_n$, which we view as $3\times 3$ block-matrices with blocks indexed by~$\Z_n$.
The slice $M_{1,a}$ is supported only on the $(2,3)$ and $(3,2)$ blocks, which are each others' transposes.
On its $(2,3)$ block it has value $f_0(a+3b)$ on coordinate $(a+b, a+2b)$ for each~$b$.
In particular, this matrix has at most one nonzero entry in each row and column.
It follows that a relabeling of the rows turns $M_{1,a}$ into a diagonal matrix with diagonal entries in~$[-1,1]$, and therefore~$\|M_{1,a}\| \leq 1$.
Similarly, we get that $\|M_{i,a}\|\leq 1$ for $i=2,3$.
Hence, $\Delta(f) \leq 1$.
Parseval's identity implies that
\beqn
\|f\|_2^2 = n\sum_{a\in \Z_n}f_0(a)^2.
\eeqn

Identify~$\Z_n$ with $\{0,1\dots,n-1\}$ in the standard way.
We choose~$f_0$ to be the M\"{o}bius function restricted to this interval.
That is, set $f_0(0) = 0$ and for $a> 0$, set
\beqn
f_0(a) = 
\begin{cases}
1 & \text{if~$a$ is square-free with an even number of prime factors}\\
-1 & \text{if~$a$ is square-free with an odd number of prime factors}\\
0 & \text{otherwise.}
\end{cases}
\eeqn
Tao and Ter\"{a}v\"{a}inen~\cite{tao2021quantitative} recently proved that 
\beqn
\norm{f_0}_{U^3}\leq \frac{1}{(\log \log n)^C}
\eeqn
for some absolute constant~$C>0$.
It is also well-known that there are $\frac{6}{\pi^2}n-O(\sqrt{n})$  integers in~$[n]$ that are square-free \cite[page 269]{hardy1979introduction}. 
Normalizing~$f$ by $(\log\log n)^C/n^2$ and taking~$n$ coprime to~6 then gives a bounded multilinear cubic polynomial satisfying
\beqn
\frac{\|f\|_2^2}{\Delta(f)} \geq \frac{6}{\pi^2}(\log\log n)^C - o(1).
\eeqn
This proves Theorem~\ref{thm:counter4} as before.

\begin{remark}
The \emph{jointly completely bounded norm} of~$f$ is given by
\beqn
\|f\|_{\jcb} = \sup_{d\in \N} \|f(A_1,A_2, A_3)\|, 
\eeqn
where the supremum is taken over maps $A_1,A_2,A_3:\Z_n\to \C^{d\times d}$ such that  $\|A_i(a)\|\leq 1$ and $[A_i(a),A_j(b)] = [A_i(a),A_j(b)^*] = 0$ for all ${i\ne j}$ and $a,b\in \Z_n$.
Note that the only difference with the~$\ccb$ norm defined in Section~\ref{sec:preliminaries} is the second commutation relation involving the complex conjugates.
This norm can also be stated in terms of tensor products and the supremum is attained by observable-valued maps. 
As such, this norm appears naturally in the context of non-local games.
It was shown in~\cite{BBBLL:2019} that Proposition~\ref{lem:gvn} also holds for the jointly completely bounded norm, that is $\|f\|_{\jcb} \leq n^2\|f_0\|_{U^3}$.
The proof of Corollary~\ref{cor:ccblb} easily implies that $\|f\|_{\ccb} \geq \|f\|_2^2/\Delta(f)$.
This was used in~\cite{briet2018failure} to prove that the $\jcb$ and $\ccb$ norms are inequivalent.
\end{remark}

\subsection{SDPs for quantum query complexity}
Theorem~\ref{thm:NoMult+AddConverse} is based on an SDP for the completely bounded approximate degree of Gribling and Laurent~\cite{gribling2019semidefinite}.
The following notation will be convenient to state the SDP.
Let $\mathcal F(n,t)$ be the set of functions $f:[n]^t\to \R$ of the form
\beqn
f({\bf i}) = \langle u, A_1(i_1)\cdots A_t(i_t)v\rangle,
\eeqn
where $u,v\in S^{d-1}$ and $A_1,\dots,A_t:[n]\to \{M\in \R^{d\times d}\st \|M\|\leq 1\}$  for some~$d\in \N$.
A basic linear algebra argument shows that any such function can be obtained by setting $d = n^t$.
Given a function $\phi:\pmset{n}\to \R$, a sequence ${\bf i}\in [n+1]^t$ and setting $x_{n+1} = 1$, define
\beqn
\hat\phi({\bf i}) = \Exp_{x\in \pmset{n}}\phi(x) \prod_{j=1}^t x_{i_j}.
\eeqn
Note that if
\beqn
\phi(x) = \sum_{S\in {[n]\choose t}}c_S\chi_S(x)
\eeqn 
is a multilinear form of degree~$t$, then
\beq\label{eq:hatform}
\hat\phi(\bf i) = 
\begin{cases}
c_S & \text{if $\{i_1,\dots,i_t\} = S$}\\
0 & \text{otherwise.}
\end{cases}
\eeq
Given  $f:\{-1,1\}^n\to[-1,1]$ and $t\in\mathbb{N}$, define

	\begin{align} 
		\sdp(f,t) =  \max\ \ &\Exp_{x\in \{-1,1\}^n}\phi(x)f(x) - w\label{eq:SDP}\\
		\nonumber \mathrm{s.t.}\ \ &\phi: \{-1,1\}^n\to\mathbb{R},\ w\in \mathbb{R}\nonumber\\
		& \|\phi\|_1= 1\nonumber\\ 
		\nonumber
		\nonumber
		& (1/w)\hat\phi\in \mathcal F(n+1, t).\nonumber
	\end{align} 
Program \eqref{eq:SDP} corresponds to the optimization problem (24) of \cite{gribling2019semidefinite} for total functions and is written in a more convenient way for our purposes. 
There,~$f$  is considered to take values in $\{-1,1\},$ but their results still hold if $f$ is allowed to take values in $\mathbb{R}$, as we do here. Also, we must point out that the $A_i(i_s)$ used in the program (24) of \cite{gribling2019semidefinite} are unitaries, but there is no problem if we substitute them by contractions, thanks to the fact that every contraction can be seen as the top left corner of an unitary matrix \cite[Lemma 7]{Aaronson2015PolynomialsQQ}.
\jnote{Be more explicit about this?}

\begin{theorem}[Gribling-Laurent]\label{thm:Gribling-Laurent}
	If the optimal value of program \eqref{eq:SDP} is stricly larger than $\eps$, then there is no $\lfloor t/2\rfloor$-query algorithm $\mathcal{A}$ such that $|\mathbb{E}(\mathcal{A}(x))-f(x)|\leq\eps.$
\end{theorem}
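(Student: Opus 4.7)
My plan is to prove the contrapositive: I assume that some $T$-query algorithm $\mathcal{A}$ with $T=\lfloor t/2\rfloor$ satisfies $|\Exp[\mathcal{A}(x)]-f(x)|\leq \eps$ for every $x\in\pmset{n}$, set $a(x):=\Exp[\mathcal{A}(x)]$, and show that any feasible pair $(\phi,w)$ for the program \eqref{eq:SDP} has objective value at most $\eps$. The natural split
\begin{equation*}
\Exp_x \phi(x) f(x) - w \;=\; \Exp_x \phi(x)\bigl(f(x)-a(x)\bigr) \;+\; \bigl(\Exp_x \phi(x)\, a(x) - w\bigr)
\end{equation*}
bounds the first summand in absolute value by $\|\phi\|_1\|f-a\|_\infty\leq\eps$ using $\|\phi\|_1=1$, so the task reduces to proving $\Exp_x \phi(x)\,a(x)\leq w$.

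For this bound I would use the quantum query representation of $a$ that motivates the definition of $\mathcal{F}(n+1,t)$. Writing the query operator as $O_x=\sum_{i\in[n+1]}x_iP_i$ with $x_{n+1}=1$ and $\{P_i\}_{i\in[n+1]}$ a resolution of the identity by mutually orthogonal projections on the query register, the degree-$2T$ expansion of the output expectation yields
\begin{equation*}
a(x)=\sum_{\mathbf{i}\in[n+1]^t} g_{\mathbf{i}}\,\prod_{j=1}^t x_{i_j},\qquad g_{\mathbf{i}}=\langle u',\,W_1P_{i_1}W_2P_{i_2}\cdots W_tP_{i_t}\,v'\rangle,
\end{equation*}
where $t=2T$, the $W_1,\dots,W_t$ are unitaries determined by $\mathcal{A}$, and $u',v'$ are unit vectors. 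Setting $A'_s(i):=W_sP_i$ displays $g$ as an element of $\mathcal{F}(n+1,t)$ whose contractions share the orthogonal-projection structure coming from the query register.

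Now I pair $a$ with $\phi$. Using $\Exp_x\phi(x)\prod_{j=1}^t x_{i_j}=\hat\phi(\mathbf{i})$ and substituting the feasibility constraint $(1/w)\hat\phi(\mathbf{i})=\langle u, B_1(i_1)\cdots B_t(i_t) v\rangle$ with contractions $B_s$ and unit vectors $u,v$, the sum factorizes on a tensor-product space as
\begin{equation*}
\Exp_x \phi(x)\,a(x)\;=\;w\,\Bigl\langle u'\otimes u,\;M_1M_2\cdots M_t\,(v'\otimes v)\Bigr\rangle,\qquad M_s:=\sum_{i\in[n+1]}A'_s(i)\otimes B_s(i).
\end{equation*}
The crux of the argument is that each $M_s$ is a contraction. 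Writing $M_s=(W_s\otimes \Id)\sum_i(P_i\otimes B_s(i))$, the middle factor maps each summand of the orthogonal decomposition $\bigoplus_i(P_i\mathcal{H})\otimes \mathcal{K}$ into itself as $\Id\otimes B_s(i)$, so it is block diagonal with blocks of operator norm at most $\|B_s(i)\|\leq 1$ and hence a contraction; the unitary $W_s\otimes \Id$ preserves norms. Chaining the $t$ contractions $M_1,\dots,M_t$ and applying Cauchy--Schwarz with the unit vectors $u'\otimes u$ and $v'\otimes v$ gives $\Exp_x \phi(x)\,a(x)\leq w$. Combined with the $\eps$-bound on the other summand, the objective value of $(\phi,w)$ is at most $\eps$, contradicting the strict lower bound and proving the theorem. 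The main obstacle I anticipate in filling in the details is the bookkeeping needed to recast the expansion of $a$ in the precise form $g_\mathbf{i}=\langle u',W_1P_{i_1}\cdots W_tP_{i_t}v'\rangle$ from the standard amplitude expression $\langle \psi_{\mathrm{out}}|U_TO_xU_{T-1}\cdots O_xU_0|0\rangle$ together with its adjoint; everything else reduces to the contraction bound on $M_s$.
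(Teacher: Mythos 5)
The paper does not actually prove Theorem~\ref{thm:Gribling-Laurent}; it imports the result from Gribling--Laurent, with only a discussion of how program~\eqref{eq:SDP} corresponds to their optimization problem (24) and why contractions may replace unitaries. Your proof is therefore an independent, self-contained derivation, and its structure is sound: the split of the objective, the $\|\phi\|_1\|f-a\|_\infty\le\eps$ bound, and the tensor-product factorization $\Exp_x\phi(x)a(x)=w\,\langle u'\otimes u,\,M_1\cdots M_t(v'\otimes v)\rangle$ with each $M_s$ a contraction are exactly the mechanism that makes the SDP valid as a lower-bound witness.

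A few details deserve care in a full write-up, though none is a real obstruction. First, the middle factor in your chain is not a unitary: writing $a(x)=\langle\psi_0|U_0^*O_x\cdots O_xU_T^*\,E\,U_TO_x\cdots O_xU_0|\psi_0\rangle$ with $E$ the observable, the factor at position $T+1$ is $U_T^*EU_T$, which is merely a contraction. This is harmless, since the argument never uses more than $\|W_s\otimes\Id\|\le 1$, but you should say ``contraction'' rather than ``unitary.'' Second, when $t$ is odd the algorithm makes $\lfloor t/2\rfloor$ queries and $a$ has degree $t-1<t$, so you must pad to length $t$: this is done by setting $A'_t(i)=\delta_{i,n+1}\cdot\Id$ (using $x_{n+1}=1$), which remains a contraction and leaves the expansion $\sum_{\mathbf i}g_{\mathbf i}\prod x_{i_j}=a(x)$ intact; your ``$t=2T$'' only covers the even case. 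Third, the entries $g_{\mathbf i}$ are a priori complex while $\mathcal F(n{+}1,t)$ is defined over real contractions; either invoke the reduction to real amplitudes (\cite[Lemma~7]{Aaronson2015PolynomialsQQ}) or simply note that the norm bound $\|M_s\|\le 1$ holds over $\C$ and conclude $\Exp_x\phi(x)a(x)=w\cdot\mathrm{Re}\langle\cdots\rangle\le w$. Finally, one implicitly needs $w>0$ in the feasibility constraint for the sign in $\Exp_x\phi(x)a(x)\le w$ to be the useful one; this is indeed how the SDP is set up. With those cleaned up, your proof is correct.
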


\subsection{Approximation of quadratic forms}
Theorem \ref{query1} implies that bounded quadratic polynomials can be approximated by one-query quantum algorithms with error at most~$1-C$. 
Moreover, for $2n$-variate bounded bilinear forms $f(x,y) = x^\mathsf{T}Ay$ for $A\in \R^{n\times n}$, we can choose~$C$ to be $1/K_G{(n)}$, where $K_G{(n)}$ is the real Grothendieck constant of dimension~$n$ (see~\cite{QQA=CBF} for a short proof). Then, bounded bilinear forms can be approximated with an additive error of at most $1-1/K_G{(n)}$. Using Theorem \ref{thm:Gribling-Laurent}, we show that this is optimal for $n=2$, in which case $K_G(2)=\sqrt{2}$~\cite{fishburn1994bell}.

\begin{proposition}\label{prop:KG(2)=epsilon(2)}
	There exists a bilinear form~$f\in\mathbb{R}[x_1,x_2,x_3,x_4]$ such that there is no one-query quantum algorithm that approximates~$f$ on every $x\in\{-1,1\}^4$ with an additive error smaller than $1-1/\sqrt{2}$. 
\end{proposition}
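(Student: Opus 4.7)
The plan is to exhibit a bilinear form $f$ on $\pmset{4}$ together with a feasible solution to the SDP~\eqref{eq:SDP} at $t=2$ whose objective value equals $1-1/\sqrt 2$; Theorem~\ref{thm:Gribling-Laurent}, applied for every $\eps<1-1/\sqrt 2$, then rules out any one-query algorithm with additive error smaller than $1-1/\sqrt 2$. The natural candidate for $f$ is the (normalized) CHSH bilinear form
\beqn
f(x_1,x_2,x_3,x_4)=\tfrac12\bigl(x_1x_3+x_1x_4+x_2x_3-x_2x_4\bigr),
\eeqn
which a short case distinction on whether $x_3=x_4$ shows to be $\pm 1$-valued on $\pmset 4$. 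Consequently $\|f\|_\infty=\|f\|_1=\|f\|_2=1$, and the obvious primal choice $\phi=f$ satisfies $\|\phi\|_1=1$ and $\Exp_x\phi(x)f(x)=1$, so the SDP value will be at least $1-w$ for any feasible $w$.

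The next step is to reformulate the constraint $(1/w)\hat\phi\in\mathcal F(5,2)$ as a Grothendieck-type Gram-matrix condition. Setting $\alpha(i)=A_1(i)^{\mathsf T}u$ and $\beta(j)=A_2(j)v$ shows that a function on $[5]^2$ lies in $\mathcal F(5,2)$ if and only if it has the form $(i,j)\mapsto\langle\alpha(i),\beta(j)\rangle$ for vectors $\alpha(i),\beta(j)$ of Euclidean norm at most~$1$; the converse direction is realized by the rank-one contractions $A_1(i)=u\alpha(i)^{\mathsf T}$ and $A_2(j)=\beta(j)v^{\mathsf T}$ with $u=v=e_d$, for which $\langle u,A_1(i)A_2(j)v\rangle=\langle\alpha(i),\beta(j)\rangle$. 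From the Fourier expansion of $f$, the target matrix $\hat\phi$ has $\pm 1/2$ in the CHSH positions (pairs with one index in $\{1,2\}$ and the other in $\{3,4\}$) and vanishes everywhere else, including on the diagonal and along the fifth row and column.

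The heart of the argument is then a Tsirelson-type construction realizing $\sqrt{2}\cdot\hat\phi$ as such a Gram matrix in $\R^4$. I would take $\alpha(i)=e_i$ for $i\in[4]$, $\alpha(5)=0$, and
\beqn
\beta(1)=\tfrac{e_3+e_4}{\sqrt 2},\ \beta(2)=\tfrac{e_3-e_4}{\sqrt 2},\ \beta(3)=\tfrac{e_1+e_2}{\sqrt 2},\ \beta(4)=\tfrac{e_1-e_2}{\sqrt 2},\ \beta(5)=0,
\eeqn
and check entrywise that $\langle\alpha(i),\beta(j)\rangle=\sqrt 2\cdot\hat\phi(i,j)$ for all $i,j\in[5]$. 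This shows $w=1/\sqrt 2$ is feasible and hence the SDP value is at least $1-1/\sqrt 2$, so Theorem~\ref{thm:Gribling-Laurent} completes the proof.

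The main obstacle is matching, simultaneously, both the off-diagonal CHSH entries $\pm 1/\sqrt 2$ and the vanishing diagonal entries $\langle\alpha(i),\beta(i)\rangle=0$ forced by $\hat\phi(i,i)=0$. The classical two-dimensional Tsirelson vectors realize the CHSH block with optimal ratio $\sqrt 2$ but span all of $\R^2$, so they cannot also be self-orthogonal; doubling the ambient dimension and placing the ``Alice'' and ``Bob'' blocks of vectors in complementary two-dimensional subspaces of $\R^4$ is precisely what resolves this.
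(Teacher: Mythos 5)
Your proof is correct and follows the paper's overall strategy: the same CHSH form $f$, the observation that $f$ is $\pm1$-valued, the primal choice $\phi=f$, and the feasible weight $w=1/\sqrt 2$, giving $\sdp(f,2)\geq 1-1/\sqrt 2$ via Theorem~\ref{thm:Gribling-Laurent}. The genuinely different ingredient is how you certify $\sqrt 2\,\hat\phi\in\mathcal F(5,2)$. You begin with a clean reformulation: for $t=2$, $\mathcal F(n,2)$ consists exactly of bilinear Gram forms $(i,j)\mapsto\langle\alpha(i),\beta(j)\rangle$ with $\|\alpha(i)\|,\|\beta(j)\|\leq 1$, since $\langle u, A_1(i)A_2(j)v\rangle=\langle A_1(i)^{\mathsf T}u, A_2(j)v\rangle$ and conversely rank-one contractions $u\alpha(i)^{\mathsf T}$, $\beta(j)v^{\mathsf T}$ realize any such Gram form. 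This is permissible because the definition of $\mathcal F(n,t)$ in the paper places no commutativity constraint and allows distinct maps $A_1\neq A_2$. You then exhibit the Gram matrix explicitly with Tsirelson-type vectors in $\R^4$, placing $\alpha(1),\alpha(2)$ and the images $\beta(3),\beta(4)$ in $\mathrm{span}(e_1,e_2)$ and $\alpha(3),\alpha(4)$, $\beta(1),\beta(2)$ in $\mathrm{span}(e_3,e_4)$, so the intra-block and diagonal entries vanish automatically while the cross entries give $\pm 1/\sqrt 2$; I checked these entries and the norm bounds, and they are correct. The paper instead hand-builds nilpotent $6\times 6$ block matrices (a small-scale Varopoulos decomposition) with a single common map $A_1=A_2$ and commuting blocks. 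Your Gram-matrix reformulation is arguably cleaner, makes the degree-2 case look exactly like classical Grothendieck/Tsirelson, and avoids verifying commutativity and nilpotency; the paper's construction has the advantage of being the degree-2 specialization of the general tool (Lemma~\ref{lem:varopoulos}) reused for higher degree elsewhere in the paper.
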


\begin{proof}
We use the bilinear form that attains the Grothendieck constant of dimension 2, which is captured by the CHSH game. This form $f\in\mathbb{R}[x_1,x_2,x_3,x_4]$ is given by 
\beqn
f(x) = \frac{1}{2}\big(x_1(x_3+x_4) + x_2(x_3 - x_4)\big).
\eeqn
Clearly~$f$ maps $\{-1,1\}^4$ to $\{-1,1\}$,
and so $\norm{f}_1=\norm{f}_2^2=1$. 
We now emulate the construction from Lemma~\ref{lem:varopoulos}.
Writing the coefficients of~$f$ as $c_S$ for $S\in {[4]\choose 2}$, for each $i\in [4]$ define the unit vector $w_i\in \R^4$ by
\beqn
w_i = \sqrt{2}\sum_{j\in [4]\setminus \{i\}} c_{\{i,j\}}e_j.
\eeqn
Now define the matrices $A_i\in \R^{6\times 6}$ by
\beqn
A(i) = \begin{pmatrix}
	0   &   0   &   0\\
	w_i&   0   &    0\\
	0   &e_i^\mathsf{T} & 0
	\end{pmatrix}
\eeqn
It is easily verified that~$A(i)^2 = 0$ and that the $(6,1)$-coordinate of $A(i)A(j)$ equals~$\sqrt{2}c_{\{i,j\}}$ if $i\ne j$, from which it also follows that these matrices commute.
Setting $A(5) = 0$, we get that
\beqn
\langle e_6, A(i)A(j)e_1\rangle =
\begin{cases}
\sqrt{2}c_{\{i,j\}} & \text{if $\{i,j\}\in {[4]\choose 2}$}\\
0 & \text{otherwise.}
\end{cases}
\eeqn
Setting $\phi = f$ then gives that $\sqrt{2}\widehat{\phi} \in \mathcal F(5,2)$ and $\|\phi\|_1 = 1$.
This shows that $\sdp(f,2) \geq 1 - 1/\sqrt{2}$
\end{proof}

Proposition~\ref{prop:KG(2)=epsilon(2)} leads to a following natural question:

\begin{que}\label{que:KG=epsilon1}
	Is it true that for any $\eps>0$ there are an integer $n$ and a bounded bilinear form $f\in\mathbb{R}[x_1,\dots,x_{2n}]$ such that there is no one-query quantum algorithm that approximates $f$ on every $x\in\{-1,1\}^{2n}$ with an error smaller than $1-\frac{1}{K_G}-\epsilon$? 
\end{que}


\subsection{Approximation of cubic forms}
\label{sec:NoMultAddConverse}
Given that a generalization of Theorem \ref{query1} has been ruled out for quartic polynomials, one may wonder if a weaker converse for the polynomial method is possible: 

\begin{que}\label{que:mult+add}  Are there constants $C>0$ and $\eps>0$ such that for every bounded  polynomial $f$ of degree $4$ there is a $2$-query algorithm $\mathcal{A}$ such that $|\mathbb{E}(\mathcal{A}(x))-Cf(x)|< \eps$ for every $x\in\{-1,1\}^n$?
\end{que}

An affirmative answer to this question would imply that every polynomial of degree $4$ could be approximated by a $2$-query algorithm with additive error $1-C+\eps$. This would be the converse for the polynomial method that motivated Theorem \ref{query1} in \cite{Aaronson2015PolynomialsQQ}. 
Theorem~\ref{thm:NoMult+AddConverse} means that the $\varepsilon$ appearing in Question~\ref{que:mult+add} cannot be arbitrarily small. In other words, Theorem~\ref{thm:NoMult+AddConverse} says that there is no multiplicative converse even if we allow an (arbitrarily) small additive error. 

\begin{proof}[ of Theorem \ref{thm:NoMult+AddConverse}]
Let~$f\in \R[x_1,\dots,x_n]$ be a bounded multilinear cubic form as in~\eqref{eq:3form}.
As shown in the proof of Corollary~\ref{cor:ccblb}, there exist unit vectors $u,v\in \R^d$ and mappings $A:\{0,1,\dots,n+1\}\to \R^{d\times d}$ such that $\|A(i)\|\leq 1$ for each~$i$ and
\beqn
\big\langle u, A(i)A(j)A(k)A(l)v\big\rangle=
\begin{cases}
\frac{c_S}{\Delta(f)} & \text{if $\{i,j,k,l\} = \{0\}\cup S$ for $S\in {[n]\choose 3}$}\\
0 & \text{otherwise.}
\end{cases}
\eeqn
Let $g = x_0f/\norm{f}_\infty$.
Then, the function $\phi = x_0f/\|f\|_1$ meets the criteria of~\eqref{eq:SDP} with $w = \Delta(f)/\|f\|_1$ and shows that
\begin{align*}
	\sdp(g,4) &\geq 
	\frac{\|f\|_2^2}{\norm{f}_1\norm{f}_\infty} - \frac{\Delta(f)}{\|f\|_1}\\ 
	&\geq \frac{\|f\|_2^2}{\norm{f}_1\norm{f}_\infty}\Big(1 - \frac{\Delta(f)\norm{f}_\infty}{\norm{f}_2^2}\Big).
\end{align*}
If~$f$ is the random example from Section~\ref{subsec:randomEx}, then $\|f\|_2^2 = {n\choose 3}$ and $\Delta(f)\|f\|_\infty \leq Cn^{5/2}$ with high probability.
In particular, the above is positive for sufficiently large~$n$.
Similarly, for any $C\in (0,1)$ we get that $\sdp(Cg,4) > 0$ for sufficiently large~$n$.
The result now follows from Theorem~\ref{thm:Gribling-Laurent}.
\end{proof}

\section*{Acknowledgements}
We thank Yinan Li and anonymous referees for TQC'22 for helpful comments on a previous version of this paper.

\bibliographystyle{alphaabbrv}
\bibliography{Bibliography}
\end{document}